\newtheorem{observation}{Observation}
\newtheorem*{ESS}{Equal Sum Subsets problem (ESS)}
\newtheorem*{SSR}{Subset-Sums Ratio problem (SSR)}
\newtheorem*{eSSR}{Subset-Sums Ratio problem (SSR) (equivalent definition)}
\newtheorem*{RestrSSR}{Restricted Subset-Sums Ratio problem}
\newtheorem*{MinMaxSSR}{Semi-Restricted Subset-Sums Ratio problem}
\begin{document}
\mainmatter              % start of a contribution
\title{A Faster FPTAS for the Subset-Sums Ratio Problem}
\titlerunning{FPTAS for the SSR}  % abbreviated title (for running head)
%                                     also used for the TOC unless
%                                     \toctitle is used
%
\author{Nikolaos Melissinos\inst{1} \and Aris Pagourtzis\inst{2}}
\authorrunning{Nikolaos Melissinos and Aris Pagourtzis} % abbreviated author list (for running head)Polytechnioupoli, 15780 Zografou, Athens, Greece
%
%%%% list of authors for the TOC (use if author list has to be modified)
\tocauthor{Nikolaos Melissinos, Aris Pagourtzis}
\institute{
School of Applied Mathematical and Physical Sciences\\
National and Technical University of Athens\\
Polytechnioupoli, 15780 Zografou, Athens, Greece,\\
\email{ge05019@central.ntua.gr}
\and
School of Electrical and Computer Engineering\\
National and Technical University of Athens\\
Polytechnioupoli, 15780 Zografou, Athens, Greece,\\
\email{pagour@cs.ntua.gr}
}

\maketitle              % typeset the title of the contribution

\begin{abstract}
The Subset-Sums Ratio problem (SSR) is an optimization problem in which, 
given a set of integers, the goal is to find two subsets such that the ratio 
of their sums is as close to 1 as possible. In this paper we develop a new FPTAS 
for the SSR problem which builds on techniques proposed in [D. Nanongkai, 
Simple FPTAS for the subset-sums ratio problem, Inf. Proc. Lett. 113 
(2013)]. One of the key improvements of our scheme is the use of a dynamic programming table in 
which one dimension represents the \emph{difference} of the sums of the two subsets. This idea, together with a careful 
choice of a scaling parameter, yields an
FPTAS that is several orders of magnitude faster than the best currently known scheme of [C. Bazgan, M. Santha, 
Z. Tuza, Efficient approximation algorithms for the Subset-Sums Equality problem, 
J. Comp. System Sci. 64 (2) (2002)].

\keywords{approximation scheme, subset-sums ratio, knapsack problems, 
combinatorial optimization}
\end{abstract}
\section{Introduction}
\label{intro}
We study the optimization version of the following NP-hard decision problem which given a set of
integers asks for two subsets of equal sum (but, in contrast to the Partition problem, the 
two subsets do not have to form a partition of the given set):
\begin{ESS}
Given a set $A = \{a_1,\ldots,a_n\}$ of $n$ positive integers, are there two 
nonempty and disjoint sets $ S_1$, $S_2$ $\subseteq \{1,\ldots,n\}$ such that 
$ \sum_{i \in S_1} a_i = \sum_{j \in S_2} a_j \mbox{?} $
\end{ESS}
\noindent
Our motivation to study the ESS problem and its optimization version comes from the fact that it is a
fundamental problem closely related to problems appearing in many scientific areas. Some examples are 
the Partial Digest problem, which comes from molecular biology (see~\cite{cie:eid:pen,cie:eid}), 
the problem of allocating individual goods (see~\cite{lip:mar:mos:sab}), tournament construction (see~\cite{khan}), and a variation of the 
Subset Sum problem, namely the Multiple 
Integrated Sets SSP, which finds applications in the field of cryptography (see~\cite{vol}).

The ESS problem has been proven NP-hard by Woeginger and Yu in \cite{wo:yu} 
and several of its variations have been proven NP-hard by Cieliebak
et al.\ in \cite{cie:eid:pag:sch:tr,cie:eid:pag,cie:eid:pag:sch}. 
The corresponding optimization problem is:
\begin{SSR}
Given a set $A = \{a_1,\ldots,a_n\}$ of $n$ positive integers, find two 
nonempty and disjoint sets $ S_1$, $S_2$ $\subseteq \{1,\ldots,n\}$ 
%such that 
%$ \sum_{i \in S_1} a_i \geq \sum_{j \in S_2} a_j $ and 
that minimize the ratio 
\begin{align*}
 &\frac{\max\{\sum_{i \in S_1} a_i, \sum_{j \in S_2} a_j\}}{\min\{\sum_{i \in S_1} a_i, \sum_{j \in S_2} a_j\}} \ \ .
\end{align*}
%is minimized.
\end{SSR}
\noindent
The SSR problem was introduced by Woeginger and Yu~\cite{wo:yu}. 
In the same work they present an $1.324$ approximation 
algorithm which runs in $O(n \log  n)$ time. 
The SSR problem received its first FPTAS by Bazgan et al.\ in \cite{baz:san:tuz}, 
which approximates the optimal solution in time no less than $O( {n^5}/{\varepsilon^3})$; to the best
of our knowledge this is still the faster scheme proposed for SSR. A second, 
simpler but slower, FPTAS was proposed by Nanongkai in \cite{nanon}.

The FPTAS we present in this paper makes use of some ideas  
proposed in~\cite{nanon}, strengthened by certain key improvements that lead to a considerable acceleration: 
our algorithm approximates 
the optimal solution in $O( {n^4}/{\varepsilon})$ time, several orders of magnitude faster 
than the best currently known scheme of~\cite{baz:san:tuz}.

\section{Preliminaries}
\label{preliminaries}
We will first define two functions that will allow us to simplify several of the expressions that we will need
throughout the paper.

\begin{definition}[Ratio of two subsets]
Given a set $A = \{a_1,\ldots,a_n\}$ of $n$ positive integers and two 
sets $ S_1, S_2 \subseteq \{1,\ldots,n\}$ we define 
$\mathcal{R}(S_1,S_2,A)$ as follows:
\begin{align*}
 \mathcal{R}(S_1,S_2,A)= 
 \begin{cases}
  \frac{\sum_{i \in S_1} a_i}{\sum_{j \in S_2} a_j} &\mbox{ if } S_1\cup S_2 \neq \emptyset,\\
  +\infty &\mbox{ otherwise.}
 \end{cases}
\end{align*}
\end{definition}

\begin{definition}[Max ratio of two subsets]
Given a set $A = \{a_1,\ldots,a_n\}$ of $n$ positive integers and two 
sets $ S_1, S_2 \subseteq \{1,\ldots,n\}$ we define 
$\mathcal{MR}(S_1,S_2,A)$ as follows:
\begin{align*}
 \mathcal{MR}(S_1,S_2,A)= \max\{ \mathcal{R}(S_1,S_2,A),  
\mathcal{R}(S_2,S_1,A)  \}\enspace.
\end{align*}
\end{definition}

\noindent
Note that, in cases where at least one of the sets is empty, 
the Max Ratio function will return $\infty$. 
%We define it this way because we search for nonempty sets with small 
%ratio so between two solutions we can keep the one with smallest $\mathcal{MR}$.
Using these functions, the SSR problem can be rephrased as shown below.
\begin{eSSR}
Given a set $A = \{a_1,\ldots,a_n\}$ of $n$ positive integers, find two 
disjoint sets $ S_1$, $S_2$ $\subseteq \{1,\ldots,n\}$ such that the value 
$\mathcal{MR}(S_1,S_2,A)$ is minimized. 
\end{eSSR}
\noindent
In addition, from now on, whenever we have a set $A=\{a_1,\ldots,a_n\}$ we will 
assume that $0 < a_1< a_2 < \ldots < a_n$ (clearly, if the input contains two equal numbers then
the problem has a trivial solution).

The FPTAS proposed by Nanonghai~\cite{nanon} approximates the SSR problem 
by solving a restricted version. 
\begin{RestrSSR}
Given a set $A = \{a_1,\ldots,a_n\}$ of $n$ positive integers and two 
integers $ 1\leq p < q \leq n$, find two disjoint sets 
$ S_1$, $S_2$ $\subseteq \{1,\ldots,n\}$ such that $\{\max S_1,\max S_2\}=\{p,q\}$ 
and the value $\mathcal{MR}(S_1,S_2,A)$ is minimized. 
\end{RestrSSR}
\noindent
Inspired by this idea, we define a less restricted version. 
The new problem requires one additional input integer, instead of two, which represents 
the smallest of the two maximum elements of the sought optimal solution. 
\begin{MinMaxSSR}
Given a set $A = \{a_1,\ldots,$ $a_n\}$ of $n$ positive integers and an 
integer $ 1\leq p < n$, find two disjoint sets $ S_1$, 
$S_2$ $\subseteq \{1,\ldots,n\}$ such that $\max S_1=p<\max$ $S_2$ 
and the value $\mathcal{MR}(S_1,S_2,A)$ is minimized. 
\end{MinMaxSSR}
\noindent
Let $A =\{a_1,\ldots,a_n\}$ be a set of $n$ positive integers and $p \in \{1,\ldots,n-1\}$.
Observe that, if $S_1^*$, $S_2^*$ is the optimal solution of SSR problem 
of instance $A$ and  $S_1^p$, $S_2^p$ the optimal solution 
of Semi-Restricted SSR problem of instance $A$, $p$ then:
\begin{equation*}
 \mathcal{MR}(S_1^*,S_2^*,A) = \min_{p \in \{1,\ldots,n-1\}}  \mathcal{MR}(S_1^p,S_2^p,A) \enspace . 
\end{equation*}
\noindent
Thus, we can find the optimal solution of SSR problem by solving 
the SSR Semi-Restricted SSR problem for all $p \in \{1,\ldots,n-1\}$.

\section{Pseudo-polynomial time algorithm for Semi-Restricted SSR problem}\label{pseudo}

Let the $A$, $p$ be an instance of the Semi-Restricted SSR problem
where $A =\{a_1,\ldots,a_n\}$ and $1\leq p<n$.
For solving the problem we have to check two cases for the maximum 
element of the optimal solution. Let $S_1^*$, $S_2^*$ be the 
optimal solution of this instance and $\max S_2^*=q$. 
We define $B=\{a_i\mid i >p ,  a_i < \sum_{j=1}^p a_j\}$ and
$C= \{a_i \mid a_i \geq \sum_{j=1}^p a_j\}$ from which we have that
either $a_{q} \in B$ or $a_{q} \in C$. Note that $A=\{a_1,\ldots,a_p\} \cup B \cup C$. 

\medskip \noindent
\textbf{Case 1 ($a_q \in C$).}
It is easy to see that if $a_q \in C$, then $a_q= \min C$ 
and the optimal solution will be $(S_1=\{1,\ldots,p\}, S_2= \{q\})$. 
We describe below a function that returns this pair of sets, thus computing the optimal solution
if Case 1 holds.
\begin{definition}[Case 1 solution]
Given a set $A = \{a_1,\ldots,a_n\}$ of $n$ positive integers and an 
integer $ 1\leq p < n$ we define the function $\mathcal{SOL}_1(A,p)$ as follows:
\begin{align*}
 \mathcal{SOL}_1(A,p)= 
 \begin{cases}
  (\{1,\ldots,p\}, \{\min C\}) &\mbox{ if } C \neq \emptyset\mbox{,}\\
  (\emptyset, \emptyset) &\mbox{ otherwise,}
 \end{cases}
\end{align*}
where $C= \{a_i \mid a_i > \sum_{j=1}^p a_j\}$.
\end{definition}
\noindent
\textbf{Case 2 ($a_q \in B$).}
This second case is not trivial.
Here, we define an integer $m = \max\{j\mid a_j \in  A \smallsetminus C\}$ and 
a matrix $T$, where $T[i,d]$, $0 \leq i \leq m, -2\cdot \sum_{k=1}^p a_k \leq d \leq \sum_{k=1}^p a_k$, 
is a quadruple to be defined below.
A cell $T[i,d]$ is nonempty if there exist two disjoint sets 
$S_1$, $S_2$ with sums $sum_1$, $sum_2$ such that $sum_1-sum_2=d$, 
$\max S_1=p$, and $S_1 \cup S_2 \subseteq \{1,\ldots,i\} \cup \{p\}$; 
if $i> p$, we require in addition that $p <\max S_2$.
In such a case, cell $T[i,d]$ consists of the two sets $S_1$, $S_2$, 
and two integers $\max (S_1\cup S_2)$ and $sum_1 + sum_2$.
A crucial point in our algorithm is that if there exist more than one pairs of sets 
which meet the required conditions, we keep the one that maximize the 
value $sum_1+sum_2$; for convenience, we make use of a 
function to check this property and select the appropriate sets. The algorithm for this case 
(Algorithm~\ref{sub2}) finally returns the pair $S_1$, $S_2$
which, among those that appear in some $T[m,d]\neq \vec{\emptyset}$, has the 
smallest ratio $\mathcal{MR}(S_1,S_2,A)$.

\begin{definition}[Larger total sum tuple selection]
Given two tuples $\vec{v_1}=(S_{1},S_{2},q,x)$ and $\vec{v_2}=(S_{1}',S_{2}',q',x')$ 
we define the function $\mathcal{LTST}(\vec{v_1},\vec{v_2})$ as follows:
\begin{align*}
 \mathcal{LTST}(\vec{v_1},\vec{v_2})= 
 \begin{cases}
  \vec{v_2} &\mbox{ if } \vec{v_1} = \vec{\emptyset} \mbox{ or } x'>x \mbox{,}\\
  \vec{v_1} &\mbox{ otherwise}\enspace .
 \end{cases}
\end{align*}
\end{definition}

\begin{algorithm}[H]
\caption{Case 2 solution [$\mathcal{SOL}_2(A,p)$ function]}
\label{sub2}
\begin{algorithmic}[1]
  \Require a strictly sorted set $A=\{a_1,\ldots,a_n\}$, $a_i\in \mathbb{Z}^+$, and an integer $p$, $1\leq p<n$.
  \Ensure the sets of an optimal solution for Case 2.
  \State $S'_1\leftarrow \emptyset$, $S'_2\leftarrow \emptyset$
  \State $Q\leftarrow \sum_{i=1}^p a_i$,  $m \leftarrow \max \{i \mid a_i < Q \}$
  \If{$m>p$}
    \ForAll{$i\in \{0,\ldots,m\}$, $d\in \{-2\cdot Q,\ldots, Q\}$}
      \State $T[i,d] \leftarrow \vec{\emptyset}$
    \EndFor
	\algstore{myalg}
\end{algorithmic}
\end{algorithm} 
	
\begin{algorithm}[H]
\begin{algorithmic}[1]
      \algrestore{myalg}
    \State $T[0,a_p]\leftarrow (\{p\},\emptyset,p,a_p)$ \Comment{$p\in S_1$ by problem definition}
    \For{$i\leftarrow1$ \textbf{to} $m$}
      \If{$i<p$}
	\ForAll{$T[i-1,d]  \neq \vec{\emptyset}$}
	  \State $(S_1,S_2,q,x)\leftarrow T[i-1,d] $
	  \State $T[i,d]\leftarrow \mathcal{LTST}(T[i,d], T[i-1,d])$ \label{line12}
	  \State $T[i,d+a_i]\leftarrow \mathcal{LTST}(T[i,d+a_i], (S_1\cup\{i\},S_2,q,x+a_i))$ \label{line13}
	  \State $T[i,d-a_i]\leftarrow \mathcal{LTST}(T[i,d-a_i], (S_1,S_2\cup\{i\},q,x+a_i))$ \label{line14}
	\EndFor
      \ElsIf{$i=p$} \Comment{$p$ is already placed in $S_1$}
	\ForAll{$T[i-1,d] \neq \vec{\emptyset}$}
	  \State $T[i,d]\leftarrow T[i-1,d]$
	\EndFor
      \Else
	\ForAll{$T[i-1,d] \neq \vec{\emptyset}$}    \label{line21}
	  \State $(S_1,S_2,q,x) \leftarrow T[i-1,d]$
	  \If{$i>p+1$} \label{line23}
	    \State $T[i,d]\leftarrow \mathcal{LTST}( T[i,d],  T[i-1,d])$
	  \EndIf \label{line25}
	  \If{$d-a_i\geq -2 \cdot Q$}
	    \State $T[i,d-a_i]\leftarrow \mathcal{LTST}( T[i,d-a_i], (S_1,S_2\cup\{i\},i,x+a_i))$  \label{line27}
	  \EndIf
	\EndFor
	\ForAll{$T[p,d]  \neq \vec{\emptyset}$} 
	  \State $(S_1,S_2,q,x)\leftarrow T[p,d] $ \label{line31}
	  \If{$d-a_i\geq -2\cdot Q$}
	    \State $T[i,d-a_i]\leftarrow \mathcal{LTST}( T[i,d-a_i],  (S_1,S_2\cup\{i\},i,x+a_i))$
	  \EndIf  \label{line34}
	\EndFor   \label{line35}
      \EndIf
    \EndFor
    \For{$d \leftarrow -2 \cdot Q$ \textbf{to} $ Q$ }
      \State $(S_1,S_2,q,x)\leftarrow T[m,d] $
      \If{$\mathcal{MR}(S_1,S_2,A)< \mathcal{MR}(S'_1,S'_2,A)$}
	\State $S'_1\leftarrow S_1$, $S'_2\leftarrow S_2$
      \EndIf
    \EndFor
  \EndIf
  \State \Return $S'_1$, $S'_2$
\end{algorithmic}
\end{algorithm}

\noindent
We next present the complete algorithm for Semi-Restricted SSR (Algorithm~\ref{Alg1}) which simply returns the best 
among the two solutions obtained by solving the two cases.
Algorithm~\ref{Alg1} runs in time polynomial in $n$ and $Q$ 
(where $Q = \sum_{i=1}^p a_i$), therefore  
it is a pseudo-polynomial time algorithm. More precisely, by using appropriate data 
structures we can store the sets in the matrix cells in $O(1)$ time (and space) per cell, 
which implies that the time complexity of the algorithm is $O(n \cdot Q)$.

\begin{algorithm}[H]
\caption{Exact solution for Semi-Restricted SSR [$\mathcal{SOL}_{ex}(A,p)$ function]}
\label{Alg1}
\begin{algorithmic}[1]
  \Require a strictly sorted set $A=\{a_1,\ldots,a_n\}$, $a_i\in \mathbb{Z}^+$, and an integer $p$, $1\leq p<n$.
  \Ensure the sets of an optimal solution of Semi-Restricted SSR.
  \State $(S_1,S_2) \leftarrow \mathcal{SOL}_1(A,p)$
  \State $(S'_1,S'_2) \leftarrow \mathcal{SOL}_2(A,p)$
  \If{$\mathcal{MR}( S_1, S_2,A) \leq \mathcal{MR}(S'_1,S'_2,A) $} 
    \State \Return $S_1$, $S_2$
  \Else
    \State \Return $S'_1$, $S'_2$
  \EndIf
\end{algorithmic}
\end{algorithm}  
\noindent

\section{Correctness of the Semi-Restricted SSR algorithm}
\label{correctness}
In this section we will prove that Algorithm~\ref{Alg1} solves 
exactly the Semi-Restricted SSR problem.
Let $S_1^*$, $S_2^*$ be the sets of an optimal solution for 
input $(A=\{a_1,\ldots,a_n\},p)$.   

Starting with the case 1 (where $\max S_2^* \in \{i\mid a_i \geq \sum_{j=1}^p a_j\} $), is easy to see that:
\begin{observation}
The sets $S_1^*=\{1,\ldots,p\}$, $S_2^*=\{ \min\{i\mid a_i \geq \sum_{j=1}^p a_j\} \}$ 
give the optimal ratio. 
\end{observation}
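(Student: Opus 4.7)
The plan is to reduce the claim to two independent monotone optimisations driven by the defining hypothesis of Case 1. Set $Q := \sum_{j=1}^p a_j$, and let $q := \max S_2^*$; by the Case 1 assumption $a_q \geq Q$. Since every index in $S_1^*$ lies in $\{1,\ldots,p\}$ (because $\max S_1^* = p$) and the $a_i$ are positive, we have
\[
\sum_{i \in S_2^*} a_i \;\geq\; a_q \;\geq\; Q \;\geq\; \sum_{i \in S_1^*} a_i,
\]
so $\mathcal{MR}(S_1^*,S_2^*,A) = \bigl(\sum_{i \in S_2^*} a_i\bigr) / \bigl(\sum_{i \in S_1^*} a_i\bigr)$. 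Minimising this ratio is therefore equivalent to maximising the denominator and minimising the numerator, and these two subproblems can be attacked independently because the optimal choices will turn out to use disjoint index ranges.

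Both subproblems are immediate. For the denominator, the constraints $S_1^* \subseteq \{1,\ldots,p\}$ and $\max S_1^* = p$ are jointly saturated by $S_1^* = \{1,\ldots,p\}$, which attains the maximum value $Q$. For the numerator, positivity of the $a_i$ forces $|S_2^*| = 1$, so $S_2^* = \{q\}$ for some $q > p$ with $a_q \geq Q$; the strict ordering of $A$ then makes the minimiser $q_0 := \min\{i > p : a_i \geq Q\}$, which coincides with the value $\min\{i : a_i \geq Q\}$ written in the observation in all non-degenerate cases (the only exception is when $p=1$ and $a_1=Q$, in which case the implicit constraint $\max S_2 > p$ bumps $q_0$ to the next index). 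Combining the two choices gives the stated pair $(\{1,\ldots,p\}, \{q_0\})$.

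The remaining point is a feasibility check: the sets $\{1,\ldots,p\}$ and $\{q_0\}$ are disjoint because $q_0 > p$, and they respect the Semi-Restricted SSR constraints since $\max S_1 = p < q_0 = \max S_2$. I do not expect a real obstacle here; the only subtlety is guaranteeing that the ``larger'' side of $\mathcal{MR}$ is always $S_2^*$, which is exactly what the opening chain of inequalities establishes and which is what permits the numerator and denominator optimisations to be decoupled.
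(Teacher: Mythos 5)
Your argument is correct and is precisely the reasoning the paper leaves implicit (the paper offers no proof beyond ``it is easy to see''): in Case~1 the $S_2$-side always dominates, so one maximises $\sum_{i\in S_1}a_i$ to $Q$ and shrinks $S_2$ to the single smallest admissible element of $C$, and these choices are compatible because they live in disjoint index ranges. Your remark about the degenerate case $p=1$, where $\min\{i\mid a_i\geq Q\}=p$ and the index must be bumped past $p$, is a genuine (minor) imprecision in the paper's statement that you handle correctly.
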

\noindent
Those are the sets which the function $\mathcal{SOL}_1(A,p)$ returns.

For the case 2 (where $\max S_2^* \in \{i\mid i>p, a_i < \sum_{j=1}^p a_j\} $) 
we have to show that the cell $T[m,d]$ (where $d= \sum_{i \in S_1^*}a_i -\sum_{j \in S_2^*}a_j$) 
contains two sets $S_1$, $S_2$ with ratio equal to optimum. Before that we will show a lemma 
for the sums of the sets of the optimal solution.
\begin{lemma}\label{lemma_sums}
Let $ Q= \sum_{i=1}^p a_i$ then we have $\sum_{i \in S_1^*}a_i \leq Q$ and $\sum_{i \in S_2^*}a_i  < 2\cdot Q$.
\end{lemma}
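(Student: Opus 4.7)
Throughout this proof I work in the Case~2 setting fixed above the lemma: $(S_1^*,S_2^*)$ is a Case~2 optimum with $q:=\max S_2^*>p$ and $a_q<Q$, so every element of $S_2^*$ has value strictly less than $Q$. The first inequality is then immediate, because $\max S_1^*=p$ by the Semi-Restricted SSR definition forces $S_1^*\subseteq\{1,\ldots,p\}$, and so $\sum_{i\in S_1^*}a_i\leq\sum_{i=1}^p a_i=Q$.

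For the second inequality my plan is to argue by contradiction. Assume $\sum_{i\in S_2^*}a_i\geq 2Q$; together with the first inequality this forces $\mathcal{MR}(S_1^*,S_2^*,A)\geq 2$, so it is enough to exhibit a feasible Semi-Restricted SSR solution with ratio strictly below $2$. I take $S_1'=\{1,\ldots,p\}$ and begin from $R:=S_2^*\cap\{p+1,\ldots,n\}$. The elements of $S_2^*$ contained in $\{1,\ldots,p-1\}$ contribute at most $Q-a_p$ to $\sum S_2^*$ (recall $p\notin S_2^*$ because $p\in S_1^*$), so $\sum R\geq 2Q-(Q-a_p)=Q+a_p>Q$. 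I then iteratively remove the smallest element of $R$ as long as $\sum R\geq 2Q$. Each removed value is at most $a_q<Q$, so on the first step at which the sum falls below $2Q$ it is still strictly above $Q$; this simultaneously shows $\sum R\in(Q,2Q)$ at the end of the process and that $R$ stays non-empty throughout. The pair $(\{1,\ldots,p\},R)$ is then disjoint and satisfies $\max\{1,\ldots,p\}=p<\max R$, so it is a valid Semi-Restricted SSR solution, and its ratio $\sum R/Q$ lies in $(1,2)$, contradicting the optimality of $(S_1^*,S_2^*)$.

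The only place I expect to need care is the termination analysis for the shrinking step: I must rule out emptying $R$ before the sum drops under $2Q$ and also keep the sum above $Q$ at the stopping time. Both facts collapse into the single observation that each removed element is bounded above by $a_q<Q$, combined with the initial estimate $\sum R>Q$. So while the idea is short, this bookkeeping deserves the most attention when writing the argument out in full.
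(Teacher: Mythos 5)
Your proof is correct, but your construction differs from the paper's. The paper also argues by contradiction from $\sum_{i\in S_2^*}a_i\geq 2Q$, but it performs a single local improvement: it keeps $S_1^*$ and deletes only $\min S_2^*$ from $S_2^*$, observing that since every element of $S_2^*$ is smaller than $Q$ (the Case~2 assumption), the trimmed set still has sum strictly above $Q\geq\sum_{i\in S_1^*}a_i$ while its sum strictly decreased, so the max ratio strictly improves --- no comparison against the threshold $2$ and no iteration is needed. You instead discard both sets, take the canonical pair $(\{1,\ldots,p\},R)$ with $R\subseteq S_2^*\cap\{p+1,\ldots,n\}$, and run a trimming loop to land $\sum R$ in $(Q,2Q)$, then beat the lower bound $\mathcal{MR}(S_1^*,S_2^*,A)\geq 2$. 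Both arguments hinge on the same key fact (each element of $S_2^*$ is less than $Q$, so one removal cannot drop the sum from $\geq 2Q$ to $\leq Q$); yours costs a termination/non-emptiness analysis that the paper avoids, but in exchange it produces an explicit feasible pair with ratio in $(1,2)$, which is slightly more informative than the paper's bare "strictly better" comparison. One cosmetic remark: the paper's own proof silently uses that $S_2^*$ cannot be a singleton under the assumption $\sum_{i\in S_2^*}a_i\geq 2Q$ (so that deleting the minimum preserves $\max S_2>p$); your construction sidesteps this issue entirely because $R$ lives in $\{p+1,\ldots,n\}$ by design.
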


\begin{proof}
Observe that $\max  S_1^*=p$. This gives us $\sum_{i \in S_1^*}a_i \leq \sum_{i=1}^p a_i$
so it remains to prove $ \sum_{i \in S_2^*}a_i  < 2\cdot Q $.
Suppose that $\sum_{i \in S_2^*}a_i  \geq 2\cdot Q $.
We can define the set $S_2$ as $S_2^* \smallsetminus \{\min  S_2^*\}$.
Note that, for all $i \in S_2^*$, we have that the $a_i < \sum_{i=1}^p a_i$.
Because of that,
\begin{align*}
\sum_{i \in S_1^*}a_i  \leq \sum_{i=1}^p a_i < \sum_{i \in S_{2}}a_i < \sum_{i \in S_2^*}a_i
\end{align*}
which means that the pair $(S_1^*, S_{2})$ is a feasible solution with smaller max ratio 
than the optimal, which is a contradiction.
\end{proof}
\noindent
The next two lemmas describe same conditions which guarantee that 
the cells of $T$ are nonempty. Furthermore, they secure that we will 
store the appropriate sets to return an optimal solution.
\begin{lemma}\label{cells_1}
If there exist two disjoint sets $(S_1, S_2)$ such that
 \begin{itemize}
  \item $\max S_2 < \max S_1=p$
  \item $\sum_{i \in S_1}a_i - \sum_{j \in S_2}a_j =d$
 \end{itemize}
then $T[i,d] \neq \emptyset$ for all $p\geq i \geq \max (S_1 \cup S_2\smallsetminus \{p\})$.
Furthermore for the sets  $(S'_1, S'_2)$ which are stored in $T[i,d]$ it holds that
$$ \sum_{i \in S'_1}a_i + \sum_{j \in S'_2}a_j \geq \sum_{i \in S_1}a_i + \sum_{j \in S_2}a_j \enspace .  $$
\end{lemma}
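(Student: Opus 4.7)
The plan is to prove a stronger claim by induction on $i$: for every $i \in \{0,1,\ldots,p\}$ and every pair of disjoint sets $(S_1,S_2)$ with $S_1 \cup S_2 \subseteq \{1,\ldots,i\} \cup \{p\}$, $\max S_1 = p$, and $\max S_2 < p$, the cell $T[i,d]$ with $d = \sum_{i \in S_1} a_i - \sum_{j \in S_2} a_j$ is non-empty and contains a tuple $(S'_1, S'_2, q, x)$ with $x \geq \sum_{i \in S_1} a_i + \sum_{j \in S_2} a_j$. The lemma then follows immediately: for any $i \in [\max(S_1 \cup S_2 \setminus \{p\}),\, p]$, the hypothesis $S_1 \cup S_2 \subseteq \{1,\ldots,i\} \cup \{p\}$ holds, so the strengthened claim applies.

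For the base case $i=0$, the only admissible pair is $(S_1,S_2) = (\{p\},\emptyset)$, forcing $d=a_p$. Line~7 of Algorithm~\ref{sub2} initializes $T[0,a_p] \leftarrow (\{p\},\emptyset,p,a_p)$, which satisfies the claim with equality $x = a_p = \sum_{S_1} a + \sum_{S_2} a$.

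For the inductive step, I would split into cases based on how $i$ interacts with $S_1 \cup S_2$. If $i \notin S_1 \cup S_2$ or $i = p$, then $(S_1,S_2)$ already fits inside $\{1,\ldots,i-1\} \cup \{p\}$, the inductive hypothesis gives the desired tuple in $T[i-1,d]$, and either line~\ref{line12} (when $i<p$) or the $i=p$ branch carries the tuple forward to $T[i,d]$ through $\mathcal{LTST}$, which can only replace it with something of at least as large a total sum. If $i \in S_1 \setminus \{p\}$ (so $i < p$), I would apply the inductive hypothesis to $(S_1 \setminus \{i\}, S_2)$ at index $i-1$ with difference $d - a_i$: line~\ref{line13} then produces a candidate for $T[i,d]$ whose total sum is that of the stored tuple at $T[i-1,d-a_i]$ plus $a_i$, which by the inductive hypothesis is at least $\sum_{S_1 \setminus \{i\}} a + \sum_{S_2} a + a_i = \sum_{S_1} a + \sum_{S_2} a$. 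The case $i \in S_2$ is symmetric via line~\ref{line14} applied to $(S_1, S_2 \setminus \{i\})$ at difference $d + a_i$.

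The key point throughout is that $\mathcal{LTST}$ always retains the tuple with the larger $x$ component, so the lower bound on the total sum is preserved at every update and never lost. The main obstacle is really just bookkeeping: matching each of the three update lines in the $i < p$ branch to the three subcases of the induction, and checking that the $i = p$ branch (which is forced to copy because $p$ is already in $S_1$) is the right thing to do given that our claim only requires $i \leq p$. No delicate inequality or non-trivial combinatorial insight seems to be needed; the lemma is essentially a verification that the DP correctly explores all relevant subset pairs and keeps the one maximizing $x = \sum_{S'_1} a + \sum_{S'_2} a$ at each $(i,d)$.
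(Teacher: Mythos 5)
Your proof is correct and takes essentially the same route as the paper's: the paper inducts on $q=\max(S_1\cup S_2\smallsetminus\{p\})$ and treats the propagation of $T[q,d]$ to rows $q<i\leq p$ via line~\ref{line12} as a separate remark, whereas you fold that propagation into a row-by-row induction on $i$ with a strengthened invariant. The case analysis ($i\in S_1$, $i\in S_2$, or neither) and the roles of lines~\ref{line12}--\ref{line14} together with $\mathcal{LTST}$ preserving the maximal total sum are identical in both arguments.
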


\begin{proof}
Note that, for all pairs $(S_1, S_2)$ which meet the conditions, their sums 
are smaller than $Q$ because $\max  (S_1\cup S_2)=p$ so for the value 
$ d=\sum_{i \in S_1}a_i - \sum_{j \in S_2}a_j$ we have 
$$-Q\leq d\leq Q \enspace.$$
The same clearly holds for every pair of subsets of $S_1$, $S_2$.

We will prove the lemma by induction on $q= \max (S_1 \cup S_2\smallsetminus \{p\})$. 
For convenience if $S_1 \cup S_2\smallsetminus \{p\} = \emptyset$ we let $q=0$.\\
$\bullet$ $q=0$ (base case).\\ 
The only pair which meets the conditions for $q=0$ is the 
$(\{p\},\emptyset)$.
Observe that cell $T[0,a_p]$ is nonempty by the construction of the table 
and the same holds for $T[i,a_p]$, $1\leq i\leq p$ (by line~\ref{line12}).
In this case the pair of sets which meets the conditions and the pair which is stored 
are exactly the same, so the lemma statement is obviously true.\\
$\bullet$ Assume that the lemma statement holds for $q= k\leq p-1$; we will prove it for $q=k+1$ as well.\\
Let $(S_1, S_2)$ be a pair of sets which meets the conditions. 
Either $q\in S_1$ or $q\in S_2$; therefore either $(S_1\smallsetminus\{q\}, S_2)$ or $(S_1, S_2\smallsetminus\{q\})$ 
(respectively) meets the conditions.
By the inductive hypothesis, we know that 
 \begin{itemize}
  \item either $T[q-1,d-a_q]$ or $T[q-1,d+a_q]$ (resp.) is nonempty 
  \item in any of the above cases for the stored pair $(S'_1,S'_2)$ it holds that: \\
  $ \sum_{i \in S'_1}a_i + \sum_{j \in S'_2}a_j  \geq \sum_{i \in S_1}a_i + \sum_{j \in S_2}a_j-a_q$
 \end{itemize} 
In particular, if $(S_1\smallsetminus\{q\},S_2)$ meets the 
conditions then $T[q-1,d-a_q]$ is nonempty. In line~\ref{line13}  
$q$ is added to the first set and therefore $T[q,d]$ is nonempty and the 
stored pair is $(S'_1\cup \{q\}, S'_2)$ (or some other with larger total sum). Hence, 
the total sum of the pair in $T[q,d]$ is at least
$$ \sum_{i \in S'_1}a_i + \sum_{j \in S'_2}a_j +a_q  \geq \sum_{i \in S_1}a_i + \sum_{j \in S_2}a_j \enspace . $$
If on the other hand $(S_1, S_2\smallsetminus\{q\})$ is the pair that meets the  
conditions then $T[q-1,d+a_q]$ is nonempty. In line~\ref{line14} 
$q$ is added to the second set and therefore $T[q,d]$ is nonempty and the 
stored pair is $(S'_1, S'_2\cup \{q\})$ (or other with larger total sum). Hence, the total sum 
of the pair in $T[q,d]$ is at least
$$ \sum_{i \in S'_1}a_i + \sum_{j \in S'_2}a_j +a_q  \geq \sum_{i \in S_1}a_i + \sum_{j \in S_2}a_j \enspace . $$
The same holds for cells $T[i,d]$ with $q<i\leq p$ (due to line~\ref{line12}).\\
This concludes the proof. 
\end{proof}

\noindent
A similar lemma can be proved for sets with maximum element index greater than $p$.

\begin{lemma}\label{cells_2}
If there exist two disjoint sets $(S_1, S_2)$ such that
 \begin{itemize}
  \item $\max S_2 =q  > p = \max S_1$
  \item $\sum_{i \in S_1}a_i \leq Q$, $\sum_{j \in S_2}a_j <2\cdot Q$
  \item $\sum_{i \in S_1}a_i - \sum_{j \in S_2}a_j =d$
 \end{itemize}
then $T[i,d] \neq \emptyset$ for all $ i \geq q$.
Furthermore for the sets  $(S'_1, S'_2)$ which are stored in $T[i,d]$ it holds that
$$ \sum_{i \in S'_1}a_i + \sum_{j \in S'_2}a_j \geq \sum_{i \in S_1}a_i + \sum_{j \in S_2}a_j \enspace .  $$
\end{lemma}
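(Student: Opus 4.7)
The plan is to mirror the induction used for Lemma~\ref{cells_1}, but now inducting on $q=\max S_2$ ranging over $\{p+1,\ldots,m\}$. At each step I remove $q$ from $S_2$, recover a table entry for the shrunk pair $(S_1, S_2\smallsetminus\{q\})$ (either from Lemma~\ref{cells_1} if the remaining second set has its max at most $p$, or from the inductive hypothesis otherwise), and then identify the precise line of Algorithm~\ref{sub2} that re-inserts $q$ into the second component. Throughout, the inequality on total sums is preserved automatically: $\mathcal{LTST}$ always retains whichever candidate tuple has the larger total sum, so only one direction of the bound ever needs to be checked at each step.

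For the base case $q=p+1$, set $S_2' = S_2\smallsetminus\{p+1\}$. Then $(S_1, S_2')$ meets the hypotheses of Lemma~\ref{cells_1} (with the convention that $\max\emptyset < p$), so $T[p, d+a_{p+1}]$ is nonempty and its stored total sum is at least $\sum_{i\in S_1}a_i + \sum_{j\in S_2'}a_j$. During iteration $i=p+1$ of Algorithm~\ref{sub2}, the second inner loop (lines~\ref{line31}--\ref{line34}) scans exactly row $p$ and writes the pair with $p+1$ adjoined to the second component into $T[p+1, d]$, giving the desired total-sum lower bound after the $\mathcal{LTST}$ update.

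For the inductive step with $q=k>p+1$, put again $S_2'=S_2\smallsetminus\{k\}$ and split into two cases. If $\max S_2'\leq p$, the base-case argument repeats verbatim at iteration $i=k$, using the second inner loop to jump directly from the cell $T[p, d+a_k]$ (supplied by Lemma~\ref{cells_1}) to $T[k, d]$. If instead $\max S_2' = q'\in(p,k)$, the inductive hypothesis applied to $(S_1, S_2')$ (whose sums still satisfy $\sum S_1\leq Q$ and $\sum S_2' < 2Q$) yields $T[k-1, d+a_k]\neq \vec{\emptyset}$ with the required total sum, and line~\ref{line27} of the first inner loop at iteration $i=k$ then produces $T[k, d]$. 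The extension to all $i>q$ is a short secondary induction using lines~\ref{line23}--\ref{line25}, which copy $T[i-1, d]$ into $T[i, d]$ without shrinking the total sum; this skip step is enabled exactly for $i>p+1$, which suffices since $i>q\geq p+1$.

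The step I expect to demand the most care is the case split in the inductive step, and concomitantly the justification of why Algorithm~\ref{sub2} contains \emph{two} inner loops for each $i>p$. The second inner loop (lines~\ref{line31}--\ref{line34}) is needed precisely in the situation $\max S_2'\leq p$: here none of the indices $p+1,\ldots,k-1$ appears in the solution, so no cell in rows $p+1,\ldots,k-1$ can carry the shrunk pair forward, and without the direct-from-row-$p$ shortcut $T[k,d]$ would remain empty. Pinning down this interplay, together with the corner case $S_2'=\emptyset$ (handled by the seed tuple $(\{p\},\emptyset,p,a_p)$ in $T[0,a_p]$ propagated through row $p$ by line~\ref{line12}), is the only delicate part of the argument.
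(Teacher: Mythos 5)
Your proof is correct and follows essentially the same route as the paper's: induction on $q=\max S_2$ with base case $q=p+1$ handled via Lemma~\ref{cells_1} and the second inner loop (lines~\ref{line31}--\ref{line34}), the inductive step split on whether $\max(S_2\smallsetminus\{q\})$ exceeds $p$ (using line~\ref{line27} or the row-$p$ shortcut accordingly), and propagation to larger $i$ via the copy at line~\ref{line23}. The only detail you leave implicit, which the paper states up front, is that $-2Q\leq d\leq Q$ for every relevant pair and each of its subpairs, which is what guarantees the guard $d-a_i\geq -2Q$ never blocks the insertions you rely on.
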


\begin{proof}
Note that, for all pairs $(S_1, S_2)$ which meet the conditions, the value 
$ d=\sum_{i \in S_1}a_i - \sum_{j \in S_2}a_j$ it holds that
$$-2\cdot Q\leq d \leq Q \enspace . $$
The same clearly holds for every pair of subsets of $S_1$, $S_2$.

We will prove the lemma by induction.  
Let $(S_1, S_2)$ meet the conditions and $q=\max S_2$.\\
$\bullet$ $q=p+1$ (base case)\\
Clearly $\max S_2=p+1$ so the sets $(S_1,S_2\smallsetminus\{p+1\})$ meet the conditions
of the Lemma~\ref{cells_1} which gives us that 
 \begin{itemize}
  \item $T[p,d+a_{p+1}]$ is nonempty
  \item for the stored pair $(S'_1, S'_2)$ it holds that: \\
  $ \sum_{i \in S'_1}a_i + \sum_{j \in S'_2}a_j \geq \sum_{i \in S_1}a_i + \sum_{j \in S_2}a_j-a_{p+1}$
 \end{itemize}
Having the $T[p,d+a_{p+1}] \neq \vec{\emptyset}$ the algorithm uses it in lines~\ref{line31}-\ref{line34} 
and adds $p+1$ to the second (stored) set so, we have that $T[p+1,d]$ is nonempty and the stored 
sets have total sum (at least):
$$  \sum_{i \in S'_1}a_i + \sum_{j \in S'_2}a_j +a_{p+1} \geq \sum_{i \in S_1}a_i + \sum_{j \in S_2}a_j \enspace. $$
Furthermore, because $T[p+1,d]$ is nonempty the above hold, additionally, for all $T[i,d]$, $i> p+1$ (because the 
condition at line~\ref{line23} is met, the algorithm fills those cells).
The above conclude the base case.\\
$\bullet$ Assuming that the lemma statement holds for $q= k>p$, we will prove it for $q=k+1$.\\
Here we have to check two cases. Either $\max(S_2 \smallsetminus \{q\})>p$ or not.
\smallskip

\noindent
\textbf{Case 1} ($\max(S_2 \smallsetminus \{q\})>p$)\textbf{.} 
The pair of sets $(S_1, S_2\smallsetminus\{q\})$ meets the conditions; 
by the inductive hypothesis, we have
\begin{itemize}
  \item $T[q-1,d-a_{q}]$ is nonempty 
  \item for the stored pair $(S'_1, S'_2)$ it holds that: \\
  $ \sum_{i \in S'_1}a_i + \sum_{j \in S'_2}a_j \geq \sum_{i \in S_1}a_i + \sum_{j \in S_2}a_j-a_{q}$
\end{itemize}
Having the $T[q-1,d+a_{p+1}] \neq \vec{\emptyset}$ the algorithm uses it in line~\ref{line27} 
and adds $q$ to the second (stored) set so we have that $T[q,d]$ is nonempty and the stored 
sets have total sum (at least):
$$  \sum_{i \in S'_1}a_i + \sum_{j \in S'_2}a_j +a_{q} \geq \sum_{i \in S_1}a_i + \sum_{j \in S_2}a_j \enspace. $$

As before, the same holds for the cells  $T[i,d]$ with $i> p+1$ because the condition 
at line~\ref{line23} is met.
\smallskip

\noindent
\textbf{Case 2} ($\max(S_2 \smallsetminus \{q\})<p$)\textbf{.} 
The sets $(S_1, S_2\smallsetminus\{q\})$ meets the conditions of the Lemma~\ref{cells_1} (because $\max S_1 = p$)
which gives that
 
 \begin{itemize}
  \item $T[p,d+a_{q}]$ is nonempty
  \item for the stored pair $(S'_1, S'_2)$ it holds that: \\
  $ \sum_{i \in S'_1}a_i + \sum_{j \in S'_2}a_j \geq \sum_{i \in S_1}a_i + \sum_{j \in S_2}a_j-a_{q}$
 \end{itemize}
Having $T[p,d+a_{q}] \neq \vec{\emptyset}$ the algorithm uses it in lines~\ref{line31}-\ref{line34}
and adds $q$ to the second (stored) set so we have that $T[q,d]$ is nonempty and the stored 
sets have total sum (at least):
$$  \sum_{i \in S'_1}a_i + \sum_{j \in S'_2}a_j +a_{q} \geq \sum_{i \in S_1}a_i + \sum_{j \in S_2}a_j \enspace. $$
Furthermore, because $T[q,d]$ is nonempty the previous hold for all $T[i,d]$, $i>q\geq p+1$ (because the condition 
at line~\ref{line23} is met).
\end{proof}

\noindent
Now we can prove that, in the second case, the pair of sets which the algorithm 
returns and the pair of sets of an optimal solution have the same ratio. 
\begin{lemma}
If $(S'_1,S'_2)$ is the pair of sets that Algorithm~\ref{sub2} returns, then: 
$$\mathcal{MR}(S'_1,S'_2,A)=\mathcal{MR}(S_1^*, S_2^*,A) \enspace.$$
\end{lemma}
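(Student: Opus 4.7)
The plan is to apply Lemma~\ref{cells_2} to the optimal pair $(S_1^*, S_2^*)$ itself and then to observe that, among pairs with a common value of $\sum S_1 - \sum S_2$, the quantity $\mathcal{MR}$ is a monotone function of the total sum $\sum S_1 + \sum S_2$.

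Since we are in Case~2, $\max S_1^* = p$ and $q^* := \max S_2^*$ lies in $B$, so in particular $p < q^* \leq m$. Lemma~\ref{lemma_sums} supplies the bounds $\sum_{i \in S_1^*} a_i \leq Q$ and $\sum_{j \in S_2^*} a_j < 2Q$ needed to invoke Lemma~\ref{cells_2}. Set $d^* := \sum_{i \in S_1^*} a_i - \sum_{j \in S_2^*} a_j$. Taking $i = m \geq q^*$, Lemma~\ref{cells_2} then yields $T[m,d^*] \neq \vec{\emptyset}$, and for the stored pair $(\tilde S_1, \tilde S_2)$ we have
\[
\sum_{i \in \tilde S_1} a_i + \sum_{j \in \tilde S_2} a_j \;\geq\; \sum_{i \in S_1^*} a_i + \sum_{j \in S_2^*} a_j,
\]
while by the invariant maintained on the table $\sum_{i \in \tilde S_1} a_i - \sum_{j \in \tilde S_2} a_j = d^*$.

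Next I claim that, among pairs sharing a common difference, a larger total sum cannot produce a strictly larger $\mathcal{MR}$. Writing $t = \sum_{i\in S_1}a_i + \sum_{j\in S_2}a_j$ and $d = \sum_{i\in S_1}a_i - \sum_{j\in S_2}a_j$, the two individual sums are $(t+d)/2$ and $(t-d)/2$, so
\[
\mathcal{MR}(S_1, S_2, A) \;=\; \frac{t + |d|}{t - |d|} \;=\; 1 + \frac{2|d|}{t - |d|},
\]
which is non-increasing in $t$ (strictly decreasing for $|d|>0$, constantly $1$ when $d=0$). Applied to $(\tilde S_1, \tilde S_2)$ and $(S_1^*, S_2^*)$, this gives $\mathcal{MR}(\tilde S_1, \tilde S_2, A) \leq \mathcal{MR}(S_1^*, S_2^*, A)$; by optimality of the latter the inequality must be an equality. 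Finally, the concluding loop of Algorithm~\ref{sub2} scans over all $d \in \{-2Q,\ldots,Q\}$ and retains the pair minimizing $\mathcal{MR}$, so the returned $(S'_1, S'_2)$ satisfies $\mathcal{MR}(S'_1, S'_2, A) \leq \mathcal{MR}(\tilde S_1, \tilde S_2, A)$, and optimality again forces equality.

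The only genuinely nontrivial step is verifying that the hypotheses of Lemma~\ref{cells_2} are met by the optimum, which is where Lemma~\ref{lemma_sums} and the Case~2 assumption $q^* \leq m$ are used; the rest reduces to the short $(d,t)$-monotonicity calculation for $\mathcal{MR}$.
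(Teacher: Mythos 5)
Your proof is correct and follows essentially the same route as the paper: invoke Lemma~\ref{lemma_sums} to verify the hypotheses of Lemma~\ref{cells_2} for the optimal pair, conclude that $T[m,d^*]$ stores a pair with the same difference and at least as large a total sum, and use the fact that equal difference plus larger total sum cannot increase $\mathcal{MR}$. Your explicit computation $\mathcal{MR}=1+\frac{2|d|}{t-|d|}$ nicely fills in the monotonicity step that the paper only asserts; the only detail worth adding is the paper's remark that the pairs stored in row $m$ (and hence the returned pair) are feasible for Semi-Restricted SSR, which is what lets optimality force the inequalities into equalities.
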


\begin{proof}
Let $m$ be the size of the first dimension of the matrix $T$. 
Observe that for all $ i$, $p+1\leq i \leq m$, the sets $S_1$, $S_2$  of the nonempty cells $T[i,d]$ 
are constructed (lines~\ref{line21}-\ref{line35} of Algorithm~\ref{sub2}) such that $\max S_1=p$ and $i \geq \max S_2 >p$. 
Therefore the pair $(S'_1,S'_2)$ returned by the algorithm is a feasible solution.
We can see that the sets $S_1^*$, $S_2^*$ meet the conditions of Lemma~\ref{cells_2} (the conditions 
for the sums are met because of Lemma~\ref{lemma_sums})
which give us that the cell $T[m,d]$ (where $d=\sum_{i \in S_1^*}a_i - \sum_{j \in S_2^* }a_j$) 
is non empty and contains two sets with total sum non less than $\sum_{i \in S_1^*}a_i + \sum_{j \in S_2^* }a_j$.
Let $S_1$, $S_2$ be the sets which are stored to the cell $T[m,d]$. Then we have
\begin{align}
 \mathcal{MR}(S'_1,S'_2,A)\leq \mathcal{MR}(S_1,S_2,A)\leq \mathcal{MR}(S_1^*, S_2^*,A) \label{opt_case_2}
\end{align}
where the second inequality is because
$$ \sum_{i \in S_1^*}a_i - \sum_{j \in S_2^* }a_j =\sum_{i \in S_1}a_i - \sum_{j \in S_2 }a_j$$
and
$$ \sum_{i \in S_1^*}a_i + \sum_{j \in S_2^* }a_j \leq \sum_{i \in S_1}a_i + \sum_{j \in S_2 }a_j \enspace . $$
By the Eq.\ref{opt_case_2} and because the $S_1^*$, $S_2^*$ have the smallest Max Ratio we have
$$\mathcal{MR}(S'_1, S'_2,A) = \mathcal{MR}(S_1^*, S_2^*,A) \enspace.$$
\end{proof}

Now, we can write the next theorem, which follows by the previous cases.

\begin{theorem}
Algorithm~\ref{Alg1} returns an optimal solution for Semi-Restricted SSR.
\end{theorem}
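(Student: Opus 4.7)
The plan is to reduce the theorem to a simple case analysis that combines the Observation about Case~1 with the lemma just proved about Case~2. Let $(S_1^*, S_2^*)$ be any optimal solution for the Semi-Restricted SSR instance $(A,p)$, and let $q = \max S_2^*$. By definition of $B$ and $C$ in Section~\ref{pseudo}, exactly one of $a_q \in B$ or $a_q \in C$ holds, which partitions the argument into the two cases already studied.

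First, I would invoke the Observation to settle Case~1: if $a_q \in C$, then $C \neq \emptyset$ and the Observation tells us that the pair $(\{1,\ldots,p\}, \{\min C\})$ achieves the optimal ratio. Since this is precisely the pair returned by $\mathcal{SOL}_1(A,p)$, we get that the first candidate $(S_1,S_2)$ computed by Algorithm~\ref{Alg1} satisfies $\mathcal{MR}(S_1,S_2,A) = \mathcal{MR}(S_1^*,S_2^*,A)$. Second, for Case~2, where $a_q \in B$, the preceding lemma already establishes that the output $(S_1',S_2')$ of $\mathcal{SOL}_2(A,p)$ satisfies $\mathcal{MR}(S_1',S_2',A) = \mathcal{MR}(S_1^*,S_2^*,A)$.

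Combining the two, in whichever case actually holds, at least one of the two candidates returned by $\mathcal{SOL}_1$ and $\mathcal{SOL}_2$ attains the optimal max-ratio, while the other is (by feasibility) no better than optimal. Algorithm~\ref{Alg1} explicitly compares the two candidates using $\mathcal{MR}$ and returns the one with the smaller (or equal) value; this final step is what ensures optimality regardless of which case $(S_1^*,S_2^*)$ falls into.

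The only minor point that needs checking is that both candidates are indeed feasible solutions for the Semi-Restricted problem (in particular, that $\max S_1 = p < \max S_2$ in each returned pair), so that taking the minimum of their $\mathcal{MR}$ values is meaningful. For $\mathcal{SOL}_1$, feasibility follows directly from the definition of $C$ (its elements have index strictly larger than $p$). For $\mathcal{SOL}_2$, feasibility is ensured by the construction in lines~\ref{line21}--\ref{line35} of Algorithm~\ref{sub2}, as noted inside the proof of the preceding lemma. I do not expect a genuine obstacle here: the theorem is essentially a bookkeeping statement stitching together the Case~1 Observation and the Case~2 lemma, so the proof should be short.
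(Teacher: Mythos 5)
Your proposal is correct and matches the paper's approach: the paper gives no explicit proof, stating only that the theorem ``follows by the previous cases,'' which is precisely the case analysis (Observation for $a_q\in C$, the preceding lemma for $a_q\in B$, then taking the better of the two candidates) that you spell out. Your additional remarks on feasibility of both candidates are a reasonable bit of extra care, not a departure from the paper.
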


\section{FPTAS for Semi-Restricted SSR and SSR}
\label{FPTAS}

Algorithm~\ref{Alg1}, which we presented at Section~\ref{pseudo}, is an exact 
pseudo-polynomial time algorithm for the Semi-Restricted SSR problem. In order 
to derivee a $(1+\varepsilon)$-approximation algorithm we will define a scaling 
parameter $\delta=\frac{\varepsilon \cdot a_p}{3 \cdot n}$ which we will use to 
make a new set $A'=\{a'_1,\ldots,a'_n\}$ with $a'_i=\lfloor \frac{a_i}{\delta}\rfloor$.
The approximation algorithm solves the problem optimally on input $(A',p)$ and returns the 
sets of this exact solution. The ratio of those sets is a $(1+ \varepsilon)$-approximation 
of the optimal ratio of the original input.

\begin{algorithm}[H]
\caption{FPTAS for Semi-Restricted SSR [$\mathcal{SOL}_{apx}(A,p,\varepsilon)$ function]}
\label{Alg2}
\begin{algorithmic}[1]
  \Require a strictly sorted set $A=\{a_1,\ldots,a_n\}$, $a_i\in \mathbb{Z}^+$, an integer $p$, $1\leq p<n$, 
  and an error parameter $\varepsilon\in (0,1)$.
  \Ensure the sets of a $(1+ \varepsilon)$-approximation solution for Semi-Restricted SSR.
  \State $\delta \leftarrow \frac{\varepsilon \cdot a_p}{3\cdot n}$
  \State $A'\leftarrow \emptyset$
  \For{$i\leftarrow1$ to $n$}
    \State $a'_i \leftarrow \lfloor \frac{a_i}{\delta} \rfloor $
    \State $A'\leftarrow A'\cup \{ a'_i \}$
  \EndFor
  \State $(S_1,S_2) \leftarrow \mathcal{SOL}_{ex}(A',p)$
  \State \Return $S_1$, $S_2$
\end{algorithmic}
\end{algorithm}  
\noindent 
Now, we will prove that the algorithm approximates the optimal 
solution by factor $(1+ \varepsilon)$. Our proof follows closely 
the proof of Theorem 2 in~\cite{nanon}. 
%We include it here for the sake of completeness.

Let $S_{A}$, $S_{B}$ be the pair of sets returned by 
Algorithm~\ref{Alg2} on input $A=\{a_1,\ldots,a_n\}$, $p$ and 
$\varepsilon$ and $(S_1^*,S_2^*)$ be an optimal solution 
to the problem.
\begin{lemma} \label{proof_lemma_1}
For any $S \in \{ S_{A},S_{B}, S_1^*,  S_2^*\}$
\begin{align}
  \sum_{i \in S} a_i - n\cdot \delta &\leq  
 \sum_{i \in S} \delta \cdot a'_i \leq \sum_{i \in S} a_i,
 \label{scale1}                                             \\
  n\cdot \delta &\leq \frac{\varepsilon}{3} \cdot \sum_{i \in S} a_i.
 \label{scale2}
\end{align}
\end{lemma}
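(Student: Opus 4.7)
The plan is to handle the two inequalities separately: \eqref{scale1} is a standard floor-rounding estimate, while \eqref{scale2} reduces to showing that each of the four candidate sets has sum at least $a_p$, which I would derive from the feasibility constraints of the Semi-Restricted SSR problem.

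For \eqref{scale1} I would start from the pointwise bound $a_i - \delta < \delta \cdot a'_i \leq a_i$, which is immediate from the definition $a'_i = \lfloor a_i/\delta\rfloor$. Summing over $i \in S$ gives the upper bound at once, and the lower bound follows because $\sum_{i\in S}\delta \cdot a'_i > \sum_{i\in S}a_i - |S|\cdot\delta \geq \sum_{i\in S}a_i - n\cdot\delta$, using $|S| \leq n$. No special properties of the particular sets $S_A, S_B, S_1^*, S_2^*$ are needed here.

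The interesting part is \eqref{scale2}. Substituting the chosen value $\delta = \varepsilon \cdot a_p/(3n)$ gives $n\cdot\delta = \varepsilon \cdot a_p/3$, so the inequality becomes $a_p \leq \sum_{i\in S} a_i$. By the definition of Semi-Restricted SSR, every feasible pair $(S_1,S_2)$ satisfies $\max S_1 = p < \max S_2$. Hence $p \in S_1$, giving $\sum_{i\in S_1} a_i \geq a_p$, and any $q = \max S_2 > p$ satisfies $a_q > a_p$ because $A$ is strictly sorted, giving $\sum_{i\in S_2} a_i \geq a_q > a_p$. The optimal pair $(S_1^*,S_2^*)$ is feasible by assumption; for the pair $(S_A,S_B)$ returned by Algorithm~\ref{Alg2}, note that Algorithm~\ref{Alg2} simply invokes $\mathcal{SOL}_{ex}$ on the scaled instance $(A',p)$, and the feasibility structure $\max S_A = p < \max S_B$ is a property of indices rather than values, so it transfers unchanged to the original instance $A$.

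The only mildly subtle point is precisely this last observation --- that the feasibility structure survives scaling --- but once it is noted, the entire proof is essentially routine arithmetic. The choice $\delta = \varepsilon \cdot a_p/(3n)$ is engineered exactly so that both inequalities hold simultaneously, because $a_p$ is the natural lower bound on every feasible subset sum in the Semi-Restricted SSR problem.
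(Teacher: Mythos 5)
Your proposal is correct and follows essentially the same route as the paper: the same pointwise floor bound $a_i - \delta \le \delta a_i' \le a_i$ summed over $|S|\le n$ for \eqref{scale1}, and the same observation that $\max S \ge p$ forces $\sum_{i\in S} a_i \ge a_p = 3n\delta/\varepsilon$ for \eqref{scale2}. The extra remark about feasibility surviving the scaling is a nice explicit touch, but it is the same underlying argument the paper uses implicitly.
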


\begin{proof}
For Eq.~\eqref{scale1} notice that for all  
$i \in \{1,\ldots,n\} $ we define $a'_i =\lfloor 
\frac{a_i}{\delta} \rfloor$. This gives us
\begin{align*}
 \frac{a_i}{\delta} -1 \leq a'_i \leq  \frac{a_i}{\delta}  \Rightarrow 
 a_i -\delta \leq \delta \cdot a_i \leq a_i. 
\end{align*} 
In addition, for any 
$S \in \{ S_{A},S_{B}, S_1^*,  S_2^*\}$ we have 
$|S|\leq n$, which means that
$$  \sum_{i \in S} a_i - n\cdot \delta \leq  
 \sum_{i \in S} \delta \cdot a'_i \leq \sum_{i \in S} a_i.$$
For the Eq.~\eqref{scale2} observe that $\max  S \geq p$ for any 
$S \in \{ S_{A},S_{B}, S_1^*,  S_2^*\}$. By this observation, we can show 
the second inequality

$$
 n\cdot \delta \leq  n \cdot \frac{\varepsilon \cdot a_p}{3\cdot n} 
\leq  \frac{\varepsilon}{3} \cdot \sum_{i \in S }a_i. \qed
$$
\end{proof}
\begin{lemma}\label{proof_lemma_2}
$
 \mathcal{MR}(S_{A},S_{B},A) \leq  
\mathcal{MR}(S_{A},S_{B},A')+ \frac{\varepsilon}{3}
$
\end{lemma}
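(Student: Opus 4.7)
The plan is to reduce the comparison between $\mathcal{MR}(S_A,S_B,A)$ and $\mathcal{MR}(S_A,S_B,A')$ to two direct applications of Lemma~\ref{proof_lemma_1}, one controlling the numerator and one controlling the denominator. First I would exploit the symmetry of the definition of $\mathcal{MR}$: without loss of generality the maximum on $A$ is attained on the ordered pair $(S_A,S_B)$, so that $\mathcal{MR}(S_A,S_B,A)=\sigma_A/\sigma_B$, where $\sigma_A := \sum_{i\in S_A}a_i \geq \sigma_B := \sum_{j\in S_B}a_j$. Writing $\sigma'_A = \sum_{i\in S_A} \delta\cdot a'_i$ and $\sigma'_B = \sum_{j\in S_B} \delta\cdot a'_j$, the $\delta$ factors cancel when forming the ratio, so $\sigma'_A/\sigma'_B = \mathcal{R}(S_A,S_B,A')$.

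The second step is a short chain of inequalities. Equation~\eqref{scale1} of Lemma~\ref{proof_lemma_1}, applied to $S_A$ and $S_B$, gives $\sigma_A \leq \sigma'_A + n\delta$ and $\sigma_B \geq \sigma'_B$. Combining these,
\begin{align*}
\frac{\sigma_A}{\sigma_B} \;\leq\; \frac{\sigma'_A+n\delta}{\sigma_B} \;=\; \frac{\sigma'_A}{\sigma_B} + \frac{n\delta}{\sigma_B} \;\leq\; \frac{\sigma'_A}{\sigma'_B} + \frac{n\delta}{\sigma_B}.
\end{align*}
Invoking equation~\eqref{scale2} with $S = S_B$ yields $n\delta/\sigma_B \leq \varepsilon/3$, and since $\sigma'_A/\sigma'_B = \mathcal{R}(S_A,S_B,A') \leq \mathcal{MR}(S_A,S_B,A')$ by the very definition of the max ratio, stringing these together gives the claimed bound.

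The only subtlety worth flagging is the last inequality: if the scaling happens to swap which of the two sums is larger (i.e.\ $\sigma'_A < \sigma'_B$), then $\mathcal{R}(S_A,S_B,A')$ is itself not the max ratio, but it is still bounded above by $\mathcal{MR}(S_A,S_B,A')$ because the latter is a maximum over both orderings. Consequently no case analysis on the scaled side is necessary, and I do not expect any genuine obstacle — the argument is essentially a careful accounting of the two uniform bounds provided by Lemma~\ref{proof_lemma_1}, plus the observation that $\sigma_B>0$ (which holds since $\max S_B \geq p$ and hence $\sigma_B \geq a_p$, also guaranteeing $\sigma'_B > 0$ once the rescaling is in force).
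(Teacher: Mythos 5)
Your proof is correct and follows essentially the same route as the paper: bound the numerator from above and the denominator from below via Eq.~\eqref{scale1}, then absorb the additive error using Eq.~\eqref{scale2}. The only cosmetic difference is that you dispatch the max in $\mathcal{MR}(S_A,S_B,A)$ by a symmetry/WLOG argument, whereas the paper bounds $\mathcal{R}(S_A,S_B,A)$ and $\mathcal{R}(S_B,S_A,A)$ separately by the identical computation.
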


\begin{proof}
\begin{align*}
\mathcal{R}(S_{A},S_{B},A)  = \frac{\sum_{i \in S_{A}}a_i}{\sum_{j \in S_{B}}a_j}
				   & \leq \frac{\sum_{i \in S_{A}}\delta\cdot a'_i + \delta \cdot n}{\sum_{j \in S_{B}}a_j} 
				   &\quad \mbox{[by Eq.~} \eqref{scale1}\mbox{]} \\
				   & \leq \frac{\sum_{i \in S_{A}} a'_i }{\sum_{j \in S_{B}}a'_j} +  
				   \frac{ \delta \cdot n}{\sum_{j \in S_{B}}a_j}
				   &\quad \mbox{[by Eq.~} \eqref{scale1}\mbox{]} \\
				   & \leq \mathcal{MR}(S_{A},S_{B},A')+ \frac{\varepsilon}{3} 
				   &\quad \mbox{[by Eq.~} \eqref{scale2}\mbox{]}
\end{align*}
The same way, we have 
$$\mathcal{R}(S_{B},S_{A},A) \leq \mathcal{MR}(S_{A},S_{B},A')+ \frac{\varepsilon}{3} $$
thus the lemma holds.
\end{proof}
\begin{lemma}\label{proof_lemma_3}
For any $\varepsilon\in (0,1)$, $ \mathcal{MR}(S_1^*,S_2^*,A') \leq (1+\frac{\varepsilon}{2})\cdot \mathcal{MR}(S_1^*,S_2^*,A)$.
\end{lemma}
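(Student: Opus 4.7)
The plan is to exploit the two bounds from Lemma~\ref{proof_lemma_1} applied to $S_1^*$ and $S_2^*$, and then reduce the claim to the elementary inequality
$$\frac{S}{S - n\delta} \leq 1 + \frac{\varepsilon}{2}, \qquad S \in \left\{\sum_{i\in S_1^*} a_i,\;\sum_{j\in S_2^*} a_j\right\}.$$
First, I would assume without loss of generality that $\sum_{i\in S_1^*} a_i \geq \sum_{j\in S_2^*} a_j$, so that $\mathcal{MR}(S_1^*,S_2^*,A) = (\sum_{i\in S_1^*} a_i) / (\sum_{j\in S_2^*} a_j)$. Since $\mathcal{MR}(S_1^*,S_2^*,A')$ is the maximum of two ratios over $A'$, I will bound each separately.

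For the ratio $\sum_{i\in S_1^*} a'_i / \sum_{j\in S_2^*} a'_j$, I apply both inequalities in Eq.~\eqref{scale1}: the numerator is at most $\sum_{i\in S_1^*} a_i / \delta$, while the denominator is at least $(\sum_{j\in S_2^*} a_j - n\delta)/\delta$. After cancelling $\delta$, this gives
$$\frac{\sum_{i\in S_1^*} a'_i}{\sum_{j\in S_2^*} a'_j} \leq \mathcal{MR}(S_1^*,S_2^*,A) \cdot \frac{\sum_{j\in S_2^*} a_j}{\sum_{j\in S_2^*} a_j - n\delta}.$$
For the reverse ratio $\sum_{j\in S_2^*} a'_j / \sum_{i\in S_1^*} a'_i$, the same two bounds yield an upper bound of $(\sum_{j\in S_2^*} a_j)/(\sum_{i\in S_1^*} a_i - n\delta)$, which (using $\sum_{j\in S_2^*} a_j \leq \sum_{i\in S_1^*} a_i$) is at most $(\sum_{i\in S_1^*} a_i)/(\sum_{i\in S_1^*} a_i - n\delta)$, and is in turn bounded by $\mathcal{MR}(S_1^*,S_2^*,A)$ times that same factor since $\mathcal{MR}(S_1^*,S_2^*,A) \geq 1$.

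The main (and only real) step is therefore to verify $S/(S-n\delta) \leq 1+\varepsilon/2$ for $S$ the sum of either $S_1^*$ or $S_2^*$. Rearranging, this is equivalent to $n\delta \leq S \cdot \varepsilon/(2+\varepsilon)$. By Eq.~\eqref{scale2}, $n\delta \leq (\varepsilon/3)\cdot S$, so it suffices to check $\varepsilon/3 \leq \varepsilon/(2+\varepsilon)$, i.e. $2+\varepsilon \leq 3$, which holds for $\varepsilon \in (0,1)$. Taking the maximum of the two ratio bounds yields
$$\mathcal{MR}(S_1^*,S_2^*,A') \leq \left(1+\frac{\varepsilon}{2}\right)\cdot \mathcal{MR}(S_1^*,S_2^*,A),$$
which completes the proof. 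The only potentially delicate point is the side case in which $S_2^*$ has the larger sum in the scaled instance $A'$ even though $S_1^*$ has the larger sum in $A$; this is precisely what requires bounding both ratios (not just the one inherited from $A$), and is handled cleanly by the observation above together with $\mathcal{MR}(S_1^*,S_2^*,A)\geq 1$.
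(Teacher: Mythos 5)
Your proof is correct and follows essentially the same route as the paper's: both rest on the two bounds of Lemma~\ref{proof_lemma_1}, the factorization of the error into the multiplicative term $\frac{S}{S-n\delta}$, and the bound $\frac{n\delta}{S-n\delta}\leq\frac{\varepsilon}{3-\varepsilon}\leq\frac{\varepsilon}{2}$ coming from Eq.~\eqref{scale2} with $\varepsilon\in(0,1)$. The only (immaterial) difference is bookkeeping: the paper picks the ordering of $(S_1^*,S_2^*)$ that realizes the maximum ratio in $A'$ and bounds that single ratio, whereas you fix the ordering by the sums in $A$ and bound both ratios separately, using $\mathcal{MR}(S_1^*,S_2^*,A)\geq 1$ for the second one.
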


\begin{proof}
If $\mathcal{R}(S_1^*,S_2^*,A')\geq 1$, let  $(S_1,S_2)=(S_1^*,S_2^*)$, 
otherwise $(S_1,S_2)=(S_2^*,$ $S_1^*)$. That gives us
\begin{align*}
 \mathcal{MR}(S_1^*,S_2^*,A') = \mathcal{R}&(S_{1},S_{2},A')  = \frac{\sum_{i \in S_{1}}a'_i}{\sum_{j \in S_{2}}a'_j} &\\
 &  \leq \frac{\sum_{i \in S_{1}}a_i}{\sum_{j \in S_{2}}a_j - n\cdot \delta} & \mbox{[by Eq.~}\eqref{scale1}\mbox{]} \\
 & = \frac{\sum_{i \in S_{2}}a_i}{\sum_{j \in S_{2}}a_j - n\cdot \delta} \cdot \frac{\sum_{i \in S_{1}}a_i}{\sum_{j \in S_{2}}a_j}& \\
 & = (1+ \frac{n\cdot \delta}{\sum_{j \in S_{2}}a_j - n\cdot \delta} ) \cdot \frac{\sum_{i \in S_{1}}a_i}{\sum_{j \in S_{2}}a_j}\enspace.&
\end{align*}
Because $S_2 \in \{S_1^*, S_2^*\}$ by Eq.~\eqref{scale2} it  follows that 
\begin{align*}
 \mathcal{MR}(S_1^*,S_2^*,A')& \leq  (1 + \frac{1}{ \frac{3}{\varepsilon }- 1})\cdot \frac{\sum_{i \in S_{1}}a_i}{\sum_{j \in S_{2}}a_j}& \\
 & =(1 + \frac{\varepsilon}{ 3- \varepsilon})\cdot \frac{\sum_{i \in S_{1}}a_i}{\sum_{j \in S_{2}}a_j}& \\
 & \leq (1 + \frac{\varepsilon}{ 2})\cdot \frac{\sum_{i \in S_{1}}a_i}{\sum_{j \in S_{2}}a_j}&  \mbox{[because }\varepsilon \in (0,1)\mbox{]}\\
 & \leq (1 + \frac{\varepsilon}{ 2})\cdot \mathcal{MR}(S_1^*,S_2^*,A).&
\end{align*}
This concludes the proof.
\end{proof}

Now we can prove that Algorithm~\ref{Alg2} is a $(1+\varepsilon)$ approximation algorithm.

\begin{theorem}\label{fptas2}
Let $S_{A}$, $S_{B}$ be the pair of sets returned by Algorithm~\ref{Alg2} 
on input ($A=\{a_1,\ldots,a_n\}$, $p$, $\varepsilon$) and $S_1^*$, $S_2^*$ be an optimal solution, then:
$$ \mathcal{MR}(S_{A}, S_{B},A) 
\leq (1+\varepsilon)\cdot \mathcal{MR}(S_1^*, S_2^*,A). $$
\end{theorem}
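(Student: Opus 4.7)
The plan is to chain the three preceding lemmas together. Starting from the left-hand side, I would first apply Lemma~\ref{proof_lemma_2} to pass from measuring the ratio of the returned sets on the original instance $A$ to measuring it on the scaled instance $A'$, incurring an additive $\varepsilon/3$ error:
\[
\mathcal{MR}(S_{A},S_{B},A) \;\leq\; \mathcal{MR}(S_{A},S_{B},A') + \tfrac{\varepsilon}{3}.
\]

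The second step is to observe that, by construction, Algorithm~\ref{Alg2} outputs $(S_{A},S_{B}) = \mathcal{SOL}_{ex}(A',p)$, and by the correctness result of Section~\ref{correctness} this pair is an optimal Semi-Restricted SSR solution on the scaled instance. In particular $(S_1^*,S_2^*)$ is a feasible solution for input $(A',p)$ (it uses the same indices and still satisfies $\max S_1^* = p < \max S_2^*$), so
\[
\mathcal{MR}(S_{A},S_{B},A') \;\leq\; \mathcal{MR}(S_1^*,S_2^*,A').
\]
Then I would apply Lemma~\ref{proof_lemma_3} to transfer the right-hand side back to the original instance, picking up a multiplicative factor of $(1+\varepsilon/2)$:
\[
\mathcal{MR}(S_1^*,S_2^*,A') \;\leq\; \bigl(1+\tfrac{\varepsilon}{2}\bigr)\cdot \mathcal{MR}(S_1^*,S_2^*,A).
\]

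The last step is the only mildly non-trivial piece: I need to turn the additive $\varepsilon/3$ into something multiplicative so the bound reads as a single $(1+\varepsilon)$ factor. For this I would use the fact that $\mathcal{MR}(S_1^*,S_2^*,A)\geq 1$ by definition of $\mathcal{MR}$ (it is a max-over-min ratio of positive sums), whence $\varepsilon/3 \leq (\varepsilon/3)\cdot \mathcal{MR}(S_1^*,S_2^*,A)$. Combining the three inequalities yields
\[
\mathcal{MR}(S_{A},S_{B},A) \;\leq\; \bigl(1+\tfrac{\varepsilon}{2}+\tfrac{\varepsilon}{3}\bigr)\cdot \mathcal{MR}(S_1^*,S_2^*,A) \;=\; \bigl(1+\tfrac{5\varepsilon}{6}\bigr)\cdot \mathcal{MR}(S_1^*,S_2^*,A),
\]
which is bounded above by $(1+\varepsilon)\cdot \mathcal{MR}(S_1^*,S_2^*,A)$ as required. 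I expect no real obstacle here; the only point worth flagging is the justification that the exact algorithm applied to $A'$ produces an optimum against which $(S_1^*,S_2^*)$ can legitimately be compared on the \emph{same} instance. The precise choice $\delta = \varepsilon a_p / (3n)$ has already been engineered in the three lemmas to make the $\varepsilon/3$ and $\varepsilon/2$ slack add up to strictly less than $\varepsilon$, so this final combination is immediate.
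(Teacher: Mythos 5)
Your proof is correct and follows exactly the same chain of inequalities as the paper's: Lemma~\ref{proof_lemma_2}, then optimality of the exact solver on $(A',p)$ against the feasible pair $(S_1^*,S_2^*)$, then Lemma~\ref{proof_lemma_3}. Your final step, using $\mathcal{MR}(S_1^*,S_2^*,A)\geq 1$ to absorb the additive $\varepsilon/3$ into the multiplicative factor, merely makes explicit what the paper leaves implicit in its last inequality.
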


\begin{proof}
The theorem follows from a sequence of inequalities:
\begin{align*}
 \mathcal{MR}(S_{B},S_{A},A) 
 &\leq \mathcal{MR}(S_{A},S_{B},A')+ \frac{\varepsilon}{3}& \mbox{[by Lemma }\ref{proof_lemma_2}\mbox{]} \\
 & \leq \mathcal{MR}(S_1^*,S_2^*,A')+ \frac{\varepsilon}{3}&\\
 & \leq (1+\frac{\varepsilon}{2})\cdot \mathcal{MR}(S_1^*,S_2^*,A) + \frac{\varepsilon}{3}& \mbox{[by Lemma }\ref{proof_lemma_3}\mbox{]}\\
 & \leq (1+\varepsilon)\cdot \mathcal{MR}(S_1^*,S_2^*,A).&
\end{align*}
\end{proof}

It remains to show that the complexity of Algorithm~\ref{Alg2} 
is $\mathcal{O}(poly(n,{1}/{\varepsilon}))$. Like we said at \ref{pseudo} the 
algorithm solves the Semi-Restricted SSR problem in $O(n \cdot Q)$ 
(where $Q=\sum_{i=1}^p a'_i$). We have to bound the value of $Q$.
By the definition of $a'_i$ we have, 
$$ Q=\sum_{i=1}^p a'_i \leq n\cdot a'_p\leq \frac{n\cdot a_p}{\delta}= \frac{3\cdot n^2}{\varepsilon}$$
which means that Algorithm~\ref{Alg2} runs in $O({n^3}/{\varepsilon})$.
\medskip

\noindent
Clearly, it suffices to perform $n-1$ executions of the FPTAS for Semi-Restricted SSR (Algorithm~\ref{Alg2}), and pick the best of the returned solutions, in order to obtain an FPTAS for the (unrestricted) SSR problem. Therefore, 
we obtain the following.

\begin{theorem}
The above described algorithm is an FPTAS for SSR that runs in $O({n^4}/{\varepsilon})$ time. 
\end{theorem}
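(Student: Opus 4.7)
The plan is to combine two ingredients that are already in place: the decomposition identity from Section~\ref{preliminaries}, which writes the SSR optimum as the minimum over $n-1$ Semi-Restricted SSR optima, together with the per-instance $(1+\varepsilon)$-guarantee established in Theorem~\ref{fptas2}. The overall algorithm calls Algorithm~\ref{Alg2} once for each $p \in \{1,\ldots,n-1\}$ and returns the pair of sets that minimizes $\mathcal{MR}$ among all the returned pairs; proving the theorem thus amounts to (i) identifying a ``good'' index $p^{*}$ at which the SSR optimum is exposed as a Semi-Restricted SSR optimum, and (ii) observing that multiplying the running time of Algorithm~\ref{Alg2} by $n-1$ stays within the claimed bound.

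First I would fix an optimal SSR solution $(S_1^*, S_2^*)$ and, without loss of generality, assume $\max S_1^* < \max S_2^*$. Setting $p^{*} := \max S_1^*$, we have $p^{*} \in \{1,\ldots,n-1\}$ and $(S_1^*, S_2^*)$ is a feasible solution of the Semi-Restricted SSR instance $(A, p^{*})$, hence an optimal one. Denoting by $(S_1^{p}, S_2^{p})$ an optimal Semi-Restricted SSR solution for parameter $p$, this yields
\[
\mathcal{MR}(S_1^{p^{*}}, S_2^{p^{*}}, A) \;=\; \mathcal{MR}(S_1^*, S_2^*, A),
\]
which is consistent with the identity $\mathcal{MR}(S_1^*,S_2^*,A)=\min_{p} \mathcal{MR}(S_1^{p},S_2^{p},A)$ recorded in the preliminaries.

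Next I would apply Theorem~\ref{fptas2} at this particular index $p^{*}$. Writing $(\widetilde{S}_1^{p}, \widetilde{S}_2^{p})$ for the output of $\mathcal{SOL}_{apx}(A, p, \varepsilon)$, the theorem gives
\[
\mathcal{MR}(\widetilde{S}_1^{p^{*}}, \widetilde{S}_2^{p^{*}}, A) \;\leq\; (1+\varepsilon)\cdot \mathcal{MR}(S_1^{p^{*}}, S_2^{p^{*}}, A) \;=\; (1+\varepsilon)\cdot \mathcal{MR}(S_1^*, S_2^*, A).
\]
Since the algorithm returns the pair minimizing $\mathcal{MR}$ over all $p$, its output is no worse than the pair produced for $p^{*}$, which establishes the $(1+\varepsilon)$-approximation guarantee.

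For the running time, Algorithm~\ref{Alg2} was already shown in Section~\ref{FPTAS} to run in $O(n^{3}/\varepsilon)$; invoking it $n-1$ times therefore produces the claimed $O(n^{4}/\varepsilon)$ bound. No step of the argument should present a real obstacle: correctness reduces to plugging Theorem~\ref{fptas2} into the decomposition identity at the right index, and the runtime is a direct multiplication. The only place where care is needed is the matching of an SSR optimum with a specific Semi-Restricted SSR instance, which is a bookkeeping exercise relying on the disjointness of $S_1^{*}, S_2^{*}$ and the strict inequality $\max S_1 = p < \max S_2$ built into the Semi-Restricted definition.
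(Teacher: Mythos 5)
Your proposal is correct and follows essentially the same route as the paper: run Algorithm~\ref{Alg2} for every $p\in\{1,\ldots,n-1\}$, return the best pair, and argue via the decomposition identity and Theorem~\ref{fptas2} applied at $p^{*}=\max S_1^{*}$, with the running time being $n-1$ times $O(n^{3}/\varepsilon)$. The paper states this in one sentence; you merely spell out the same argument in full.
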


\section{Conclusion}
\label{conclusion}

In this paper we provide an FPTAS for the Subset-Sums Ratio (SSR) problem that is much faster than the 
best currently known scheme of Bazgan et al.~\cite{baz:san:tuz}. 
There are two novel ideas that provide this improvement. 
%First, we were inspired by Nanongkai's paper~\cite{nanon},
%where a simpler but slower solution is provided by restricting to instances where 
%the two maximum elements are known; 
The first comes from observing that in~\cite{nanon} the proof of correctness essentially relies 
only on the value of the smallest 
of the two maximum elements; this led to the idea to use only that information in order to solve the problem 
by defining and solving a new variation which we call Semi-Restricted SSR.
A key ingredient in our approximation scheme is the use, in the scaling parameter $\delta$, of 
a value smaller than the sums of the sets of both optimal and approximate solutions 
(which in our case is the value of the smallest of the two maximum elements). We believe that this technique can
be used in several other partition problems, e.g.\ such as those described in~\cite{lip:mar:mos:sab,vol}.
%We have to emphasize the above, because it seems to be a key observation to approximate ratio of sums and it 
%may be applied to partition problems, like the search for allocations of individual goods (\cite{lip:mar:mos:sab}) or to 
%variations of Knapsack and SSP (like the multiple integrated sets SSP in \cite{vol}).

The second idea was to use one dimension only, for the difference of the sums of the two sets, instead of 
two dimensions, one for each sum. This idea, combined with the observation that between 
two pairs of sets with the same difference, the one with the largest total sum has
ratio closer to~1, is the key to obtain an optimal solution in much less time.
It's interesting to see whether and how this technique could be used to problems that seek more 
than two subsets.

A natural open question 
is whether our techniques can be applied to obtain approximation results for other variations of 
the SSR problem \cite{cie:eid:pag,cie:eid:pag:sch}.

%
% ---- Bibliography ----
%
%\bibliographystyle{splncs}
%\bibliography{diploma}

\begin{thebibliography}{6}
%

\bibitem{baz:san:tuz}
C. Bazgan, M. Santha, Z. Tuza, 
Efficient Approximation Algorithms for the \textsc{Subset-Sums Equality} Problem, 
\emph{Journal of Computer and System Sciences} 64(2): 160-170 (2002). 
\url{https://doi.org/10.1006/jcss.2001.1784}

\bibitem{cie:eid:pen}
M. Cieliebak, S. Eidenbenz, P. Penna, Noisy data make the partial digest problem NP-hard, in Benson, G., Page, R.D.M. (eds.) WABI 2003, LNCS (LNBI) 2812: 111–123, Springer, Heidelberg (2003). 
\url{https://doi.org/10.1007/978-3-540-39763-2_9}

\bibitem{cie:eid}
M. Cieliebak, S. Eidenbenz, Measurement Errors Make the Partial Digest Problem NP-Hard, 
in Farach-Colton M. (eds) LATIN 2004: Theoretical Informatics, LNCS 2976: 379-390, Springer, Berlin, Heidelberg.
\url{https://doi.org/10.1007/978-3-540-24698-5_42}

\bibitem{cie:eid:pag:sch:tr}
M. Cieliebak, S. Eidenbenz, A. Pagourtzis, K. Schlude, \emph{Equal Sum Subsets: Complexity
of Variations}, Tech. Report 370, ETH Z\"urich, Dept. of Computer Sci., 2002. Available at
\url{ftp://ftp.inf.ethz.ch/doc/tech-reports/3xx/370.pdf}

\bibitem{cie:eid:pag}
M. Cieliebak, S. Eidenbenz, A. Pagourtzis, 
Composing Equipotent Teams, in Lingas A., Nilsson B.J. (eds) 
Fundamentals of Computation Theory, FCT 2003, LNCS 2751: 98-108. Springer, Berlin, Heidelberg. \url{https://doi.org/10.1007/978-3-540-45077-1_10}


\bibitem{cie:eid:pag:sch}
M. Cieliebak, S. Eidenbenz, A. Pagourtzis, K. Schlude, 
On the Complexity of Variations of Equal Sum Subsets, \emph{Nordic Journal of Computing} 14(3): 151-172 (2008). Available at 
\url{http://users.softlab.ntua.gr/~pagour/papers/NJC09.pdf}

\bibitem{khan}
M.A. Khan, \emph{Some Problems on Graphs and Arrangements of Convex Bodies} (Ph.D. Thesis), University of Calgary, 2017. Retrieved from \url{https://prism.ucalgary.ca/handle/11023/3765}

\bibitem{lip:mar:mos:sab}
R. J. Lipton , E. Markakis , E. Mossel , A. Saberi. On approximately fair allocations of 
indivisible goods. Proceedings of the 5th ACM conference on Electronic Commerce (EC'04), pp. 125--131, May 17-20, 2004, New York, NY, USA. 
\url{https://doi.org/10.1145/988772.988792}


\bibitem{nanon}
D. Nanongkai. 
Simple FPTAS for the subset-sums ratio problem. 
\emph{Information Processing Letters} 113(19-21): 750-753 (2013).
\url{https://doi.org/10.1016/j.ipl.2013.07.009}

\bibitem{vol}
N. Voloch: MSSP for 2-D sets with unknown parameters and a 
cryptographic application. \emph{Contemporary Engineering Sciences}, 
10(19): 921-931 (2017). 
\url{https://doi.org/10.12988/ces.2017.79101}


\bibitem{wo:yu}
G. J. Woeginger, Z. Yu:
On the Equal-Subset-Sum Problem.\emph{Information Processing Letters} 42(6): 299-302 (1992).
\url{https://doi.org/10.1016/0020-0190(92)90226-L}

%\bibitem {smit:wat}
%Smith, T.F., Waterman, M.S.: Identification of common molecular subsequences.
%J. Mol. Biol. 147, 195?197 (1981). \url{doi:10.1016/0022-2836(81)90087-5}

%\bibitem {may:ehr:stein}
%May, P., Ehrlich, H.-C., Steinke, T.: ZIB structure prediction pipeline:
%composing a complex biological workflow through web services.
%In: Nagel, W.E., Walter, W.V., Lehner, W. (eds.) Euro-Par 2006.
%LNCS, vol. 4128, pp. 1148?1158. Springer, Heidelberg (2006).
%\url{doi:10.1007/11823285_121}

\end{thebibliography}

\end{document}